\newtheorem{theorem}{Theorem}
\newtheorem{Def}{Definition}
\newtheorem{Ex}{Example}
\begin{document}

\title{Delayed Arrow-of-Time Detection in Signed Laplacian Dynamics\footnote{We thank Stuart Zoble for very helpful conversations.  Financial support from NYU Stern School of Business, NYU Shanghai, J.P. Valles, and the HHL - Leipzig Graduate School of Management is gratefully acknowledged.  ChatGPT o3 and SuperGrok were used to help find proof tactics, create numerical examples, and identify references.}}

\author
{Adam Brandenburger \footnote{Stern School of Business, Tandon School of Engineering, NYU Shanghai, New York University, New York, NY 10012, U.S.A., adam.brandenburger@stern.nyu.edu}
\and{Pierfrancesco La Mura \footnote{HHL - Leipzig Graduate School of Management, 04109 Leipzig, Germany, plamura@hhl.de}}
    }
\date{{\textcolor{blue}{Preliminary Version}}  \\ \vspace{3mm} \today}
\maketitle

\thispagestyle{empty}

\begin{abstract}
We study how rapidly the direction of time becomes operationally detectable from mesoscopic data when state-weights may be positive or negative.  In contrast with classical Markov processes -- where forward evolution is instantly distinguishable from its reverse -- signed dynamics can render the arrow of time undetectable during an initial interval.  Assume the generator of the signed dynamics is a symmetric signed Laplacian with a single linear invariant and a phase-space Second Law holds in the form of non-decreasing R\'enyi-$2$ entropy.  Drawing on recent results on eventual exponential positivity of signed Laplacians (Chen et al., 2021), we define a test that correctly identifies the direction of thermodynamic time if conducted over a time interval of length at least $\tau$.  We go on to prove that the test cannot deliver an incorrect conclusion if conducted over a shorter interval.  Dropping symmetry, we exhibit a superquantum example where $\tau = +\infty$, so that the arrow of time remains permanently undetectable under our test.
\end{abstract}

\section{Introduction}
Detecting the temporal arrow from intrinsic system data is straightforward for classical Markov processes.  Strict positivity of forward propagators arises immediately, while the backward propagators permanently contain at least one negative entry.  In quantum phase space, the same diagnostic fails.  Signed quasi-probability propagators may begin with mixed signs in both temporal directions, making the arrow, at least initially, undetectable.

This paper demonstrates that initial undetectability is indeed possible, and we also identify conditions under which it is transient.  We develop a test on a signed-probability system that -- after an initial inconclusive interval -- discloses the arrow-of-time.  Our test operates at the mesoscopic level by preparing and measuring phase-space state distributions.  We distinguish this from the microscopic level, where the experimenter has or obtains exact information about the (infinitesimal) generator.  Our question also differs from macroscopic-level explanations of the arrow of time, such as arguments from thermodynamics (Boltzmann, 1872) or cosmology (Carroll and Chen, 2005).  Indeed, we will assume a Second Law in the form of non-decreasing R\'enyi-$2$ entropy on phase space.  Our question is if and how an experimenter can determine the direction of thermodynamic time from measurement and observation at the mesoscopic level on a signed-probability (e.g., quantum) system.

Evolution is driven by (the negative of) a symmetric signed Laplacian $\Lambda$, i.e., a symmetric real matrix whose rows sum to zero.  The only linear invariant is conservation of total probability ($\operatorname{corank}(\Lambda) = 1$).  Forward and backward propagators are $F(t) = e^{t\Lambda}$ and $B(t) =e^{-t\Lambda}$.  Our sign test is passed at a time $t$ if one propagator is strictly positive and the other contains at least one negative entry.  Otherwise, the outcome of the test is deemed inconclusive.

We adopt the Second Law axiom that the R\'enyi-$2$ entropy of every evolving signed probability vector is non-decreasing.  Applying the recent eventual exponential positivity theorem of Chen et al.~(2021), we show the existence of a finite time $0 < \tau < +\infty$ such that $F(t)$ is strictly positive for all $t \ge \tau$.  An auxiliary argument shows that $B(t)$ necessarily retains at least one negative entry.  Accordingly, every sign test conducted over an interval of length~$\tau$ or longer identifies the thermodynamic arrow of time without error.

Suppose this test is conducted at a time $t < \tau$ and returns a strictly positive backward propagator and a forward propagator with a negative entry.  The experimenter will then misidentify the direction of time.  We prove that this cannot happen, so that the test is reliable.  We go on to study to the asymmetric case ($\Lambda^{\!\top} \not= \Lambda$) and produce an example where the detection interval $\tau = +\infty$, so that the experimenter can never determine the direction of time.  We also comment on a converse direction when we assume finite-time detectability.

Our work is complementary to recent work on the quantum arrow of time, notably Rubino, Manzano, and Brukner (2021) and Guff, Shastry, and Rocco (2025).  The first paper shows how the arrow of time may become undefined through superposition of forward and backward processes, and how measurement restores the arrow.  The second paper argues that, contrary to common belief, the quantum Langevin and Lindblad master equations are time-reversal symmetric, not asymmetric, when derived in a neutral way.  Our focus is complementary in asking about the operational accessibility of an assumed quantum thermodynamic arrow of time.

The rest of the paper is organized as follows.  Section \ref{framework} formalizes signed-probability dynamics, our R\'enyi Second Law, and the recent matrix theory we use (Noutsos, 2006; Noutsos and Tsatsomeros, 2008; Olesky, Tsatsomeros, and van den Driessche, 2009; Chen et al., 2021).  Section \ref{arrow} formulates our arrow-of-time test, proves that it works (at time $\tau$ onward), and gives an example to demonstrate delayed detection.  Section \ref{reliability} proves that the test cannot deliver a false result if conducted before time $\tau$.  Section \ref{asymmetric} presents an example with infinitely delayed detection.

\section{Framework} \label{framework}
We consider dynamics on a finite phase space $X = \{x_1, \dots, x_n\}$ for an $n$-state system, with probabilities on states described by real-valued column vectors:
\begin{equation}
p(t) = (p_1(t), \dots, p_n(t))^{\!\top} \in \mathbb{R}^n.
\end{equation}
Probabilities are non-negative in the classical case but may be negative (and, possibly, greater than $1$) in the signed case.  We let $\Lambda \in \mathbb{R}^{n \times n}$ be the generator of a continuous-time Markov process, as follows.  Given an initial -- possibly signed -- probability measure $p(0) = p_0 \in \mathbb{R}^n$, we write:
\begin{equation}
p(t) = e^{t \Lambda} \, p_0 \,\, \text{for all} \,\, t \ge 0.
\end{equation}

We now specify the dynamics.  Set $\Lambda = - L$.

\begin{Def} \label{def1}
A generator $\Lambda$ is \textbf{signed Laplacian} if $L$ is a corank-$1$ signed Laplacian $L$, that is: (i) $L^{\!\top} = L$; (ii) $L \mathbf{1} = \mathbf{0}$; and (iii) $\operatorname{corank}(L) = 1$.
\end{Def}

Condition (i) is the microreversibility assumption of statistical mechanics.  Condition (ii) is conservation of total probability.  Condition (iii) says this is the sole linear invariant of the system.  The Laplacian is signed because we do not require the off-diagonal entries $L_{ij}$ to be non-positive.

Next, we introduce our entropy measure.  We choose R\'enyi $2$-entropy, since it is well behaved in both classical and signed-probability regimes.

\begin{Def} \label{def2}
The \textbf{R\'enyi $2$-entropy} of an unsigned or signed probability measure $p$ is given by:
\begin{equation}
H_2(p) = - \log_2 \bigl(\sum_{i=1}^n p_i^2 \bigr).
\end{equation}
\end{Def}

This is a standard definition in the classical (unsigned) case.  It is equally well defined in the signed case and can be derived axiomatically for signed measures via an extension of the original R\'enyi (1961) and Dar\'oczy (1963) axiomatizations for unsigned measures.  See Brandenburger and La Mura (2025), where it is shown that R\'enyi $2$-entropy -- unlike a signed Shannon entropy formula -- is well-behaved on signed phase space.  (It witnesses probabilistic negativity, is Schur-concave, satisfies a quantum H-theorem, and is constant under discrete Moyal-bracket evolution.)
\vspace{0.1in}

\noindent \textbf{Axiom (R\'enyi-$2$ Second Law).}  \textit{For every trajectory $\bigl(p(t)\bigr)_{t \ge 0}$:}
\begin{equation}
\frac{d}{dt} \, H_2\bigl(p(t)\bigr) \geq 0 \,\, \text{\textit{for all}} \,\, t \ge 0.
\end{equation}

\begin{theorem} \label{thm1}
The Second Law implies that $\Lambda$ is negative semidefinite.
\end{theorem}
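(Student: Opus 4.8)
The plan is to differentiate the R\'enyi-$2$ entropy along an arbitrary trajectory and read off a definiteness condition on $\Lambda$. Write $S(t) = \sum_i p_i(t)^2 = p(t)^{\!\top} p(t)$. First note that $S(t)$ never vanishes: since $\Lambda\mathbf{1}=\mathbf{0}$ we have $e^{t\Lambda}\mathbf{1}=\mathbf{1}$, and since $\Lambda^{\!\top}=\Lambda$ this gives $\mathbf{1}^{\!\top}p(t) = \mathbf{1}^{\!\top}e^{t\Lambda}p_0 = \mathbf{1}^{\!\top}p_0 = 1$, so $p(t)$ stays on the sum-to-one hyperplane and $S(t)>0$. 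Then $H_2(p(t)) = -\log_2 S(t)$ is smooth, and using $\dot p(t) = \Lambda p(t)$,
\[
\frac{d}{dt}\,H_2\bigl(p(t)\bigr) \;=\; -\frac{1}{\ln 2}\,\frac{\dot S(t)}{S(t)} \;=\; -\frac{2}{\ln 2}\,\frac{p(t)^{\!\top}\Lambda\,p(t)}{p(t)^{\!\top}p(t)}.
\]
Hence the Second Law is exactly the statement that $p(t)^{\!\top}\Lambda\,p(t)\le 0$ along every trajectory.

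Next I would evaluate this at $t=0$ only: the axiom already forces $p_0^{\!\top}\Lambda\,p_0\le 0$ for every admissible initial vector, i.e.\ for every $p_0 \in \mathbb{R}^n$ with $\mathbf{1}^{\!\top}p_0 = 1$. The remaining task is to upgrade this from the affine hyperplane $\{p:\mathbf{1}^{\!\top}p=1\}$ to all of $\mathbb{R}^n$, and this is where conditions (i) and (ii) on $L$ do the work. Given an arbitrary $v\in\mathbb{R}^n$, set $w = v + a\mathbf{1}$ with $a=(1-\mathbf{1}^{\!\top}v)/n$, so that $\mathbf{1}^{\!\top}w=1$; expanding and using $\Lambda\mathbf{1}=\mathbf{0}$,
\[
w^{\!\top}\Lambda\,w \;=\; v^{\!\top}\Lambda\,v + 2a\,v^{\!\top}\Lambda\mathbf{1} + a^2\,\mathbf{1}^{\!\top}\Lambda\mathbf{1} \;=\; v^{\!\top}\Lambda\,v .
\]
Applying the hyperplane bound to $w$ yields $v^{\!\top}\Lambda\,v = w^{\!\top}\Lambda\,w \le 0$. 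Since $v$ is arbitrary and $\Lambda$ is symmetric, $\Lambda$ is negative semidefinite.

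The proof is short, so I do not expect a serious obstacle; the only step that needs care is this last reduction, since the Second Law constrains the quadratic form $v\mapsto v^{\!\top}\Lambda v$ only on an affine hyperplane, and it is the kernel vector $\mathbf{1}$ (condition (ii)) together with symmetry (condition (i)) that lets one translate an arbitrary $v$ onto that hyperplane without changing the value of the form. Two incidental remarks: one never needs the inequality at $t>0$ (though it does hold, since $p(t)$ remains admissible for all $t$), and corank-$1$-ness -- condition (iii) -- plays no role here and enters only in the later positivity arguments.
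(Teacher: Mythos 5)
Your proposal is correct and follows essentially the same route as the paper: differentiate $H_2$ along a trajectory to get $p^{\!\top}\Lambda\,p\le 0$ on the sum-to-one hyperplane, then translate an arbitrary vector onto that hyperplane by a multiple of $\mathbf{1}$ and use $\Lambda\mathbf{1}=\mathbf{0}$ (and $\mathbf{1}^{\!\top}\Lambda=\mathbf{0}$ by symmetry) to see the quadratic form is unchanged. Your shift $w=v+a\mathbf{1}$ with $a=(1-\mathbf{1}^{\!\top}v)/n$ is exactly the paper's $x=y+(\tfrac1n-\mu)\mathbf{1}$, so there is nothing to add beyond your (welcome) explicit remark that $S(t)>0$.
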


\begin{proof}
The Second Law implies:
\begin{equation}
\frac{d}{dt} \, H_2\bigl(p(t)\bigr) = - \frac{2 p(t)^{\!\top} \Lambda \,p(t)}{\log_e 2 \, \|p(t)\|_2^2} \ge 0,
\end{equation}
from which:
\begin{equation} \label{eq5}
x^{\!\top}\Lambda \, x \le 0,
\end{equation}
for all $x \in \mathbb{R}^n$ with $\sum_i x_i = 1$.  Next, write any $y \in \mathbb{R}^n$ as:
\begin{equation}
y = \mu\mathbf 1 + z \,\, \text{where} \,\, \mu = \frac{1}{n}\mathbf{1}^{\!\top} y \,\, \text{and} \,\, \mathbf{1}^{\!\top}z = 0.
\end{equation}
Set:
\begin{equation} \label{eq7}
x = z + \frac{1}{n}\mathbf{1} = y + \bigl( \frac{1}{n} - \mu \bigr)\mathbf{1},
\end{equation}
so that $\mathbf{1}^\mathsf{T}x = 1$.  From Equation \ref{eq7}, we calculate:
\begin{equation}
y^{\!\top} \Lambda \, y = x^{\!\top} \Lambda \, x - x^{\!\top} \Lambda \, (\frac{1}{n} - \mu) \mathbf{1} - (\frac{1}{n} - \mu)\mathbf{1}^{\!\top} \Lambda \, x + (\frac{1}{n} - \mu)\mathbf{1}^{\!\top} \Lambda \, (\frac{1}{n} - \mu)\mathbf{1}.
\end{equation}
Using $\Lambda \mathbf{1} = 0$, and therefore $\mathbf{1}^\mathsf{T} \Lambda^\mathsf{T} = \mathbf{1}^\mathsf{T} \Lambda = 0$, all terms on the right side vanish except the first.  Since $\mathbf{1}^\mathsf{T}x = 1$, we can use Equation \ref{eq5} to conclude:
\begin{equation}
y^{\!\top} \Lambda \, y \le 0,
\end{equation}
as required.
\end{proof}

Next is the recent matrix theory we will use.  We quote a definition (Noutsos and Tsatsomeros, 2008) and result (Chen at al., 2021, Theorem 6; Noutsos, 2006; and Olesky, Tsatsomeros, and van den Driessche, 2009).  Given a matrix $M$, define $\tau$ by:
\begin{equation}
\tau = \inf \{t > 0 : e^{sM} > 0 \,\, \text{for all} \,\, s \ge t\}.
\end{equation}
That is, time $\tau$ is the first time at which $e^{sM}$ has all strictly positive entries for all times $s \ge \tau$.  If there is no such $\tau$, we write $\tau = +\infty$.

\begin{Def} \label{def3}
A matrix $M$ is \textbf{eventually exponentially positive} if $\tau < +\infty$.
\end{Def}

\begin{theorem} \label{thm2}
Fix a signed Laplacian $M$.  Then $M$ is positive semidefinite if and only if $-M$ is eventually exponentially positive.
\end{theorem}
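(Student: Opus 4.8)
The plan is to run the whole argument off the spectral decomposition of the symmetric matrix $M$. Its eigenvalues are real; list them as $\lambda_1 \le \lambda_2 \le \dots \le \lambda_n$ with an orthonormal eigenbasis $v_1, \dots, v_n$. Because $M\mathbf 1 = \mathbf 0$, the vector $\mathbf 1$ spans an eigenspace for the eigenvalue $0$, and because $\operatorname{corank}(M) = 1$ that eigenspace is one-dimensional, so exactly one $\lambda_k$ equals $0$. Two facts get used repeatedly: $e^{-sM} = \sum_{k=1}^n e^{-s\lambda_k}\, v_k v_k^{\!\top}$, and $e^{-sM}\mathbf 1 = \mathbf 1$ for every $s \ge 0$ (since $M^k\mathbf 1 = \mathbf 0$ for $k \ge 1$).

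For the direction ``$M$ positive semidefinite $\Rightarrow -M$ eventually exponentially positive'': semidefiniteness together with $\operatorname{corank}(M)=1$ forces $0 = \lambda_1 < \lambda_2 \le \dots \le \lambda_n$ and $v_1 = \mathbf 1/\sqrt n$. In the spectral sum every term with $k \ge 2$ has $e^{-s\lambda_k} \to 0$, so $e^{-sM} \to \tfrac1n\,\mathbf 1\mathbf 1^{\!\top}$ entrywise as $s \to \infty$. The limit has every entry equal to $1/n > 0$, so there is a finite $t$ with $e^{-sM} > 0$ for all $s \ge t$; that is precisely the eventual exponential positivity of $-M$ (finiteness of $\tau$ for the matrix $-M$). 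This is the only place the corank-$1$ hypothesis is needed: without it the limit is the projection onto $\ker M$, which may carry zero entries.

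For the converse I would invoke Perron--Frobenius. Assume $-M$ is eventually exponentially positive, so $e^{-sM}$ is entrywise positive for all large $s$. For a strictly positive matrix the spectral radius is the only eigenvalue possessing a nonnegative eigenvector; since $\mathbf 1 > 0$ is an eigenvector of $e^{-sM}$ with eigenvalue $1$, we get $\rho(e^{-sM}) = 1$. But the eigenvalues of $e^{-sM}$ are the $e^{-s\lambda_k}$, the largest being $e^{-s\lambda_1}$, so $e^{-s\lambda_1} = 1$, i.e.\ $\lambda_1 = 0$; hence every $\lambda_k \ge 0$ and $M$ is positive semidefinite. (For a self-contained version avoiding Perron--Frobenius: if $\lambda_1 < 0$, then scaling the spectral sum by $e^{s\lambda_1}$ and letting $s \to \infty$ gives $e^{s\lambda_1}e^{-sM} \to P$, the orthogonal projection onto the $\lambda_1$-eigenspace, which is orthogonal to $\mathbf 1$ because $\lambda_1 \ne 0$; being an entrywise limit of positive matrices, $P$ is entrywise nonnegative, but a nonnegative matrix with $P\mathbf 1 = \mathbf 0$ has every row summing to zero and therefore vanishes, contradicting that $P$ is a nonzero projection.)

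The single idea with any content is that $\mathbf 1$ is a \emph{nonnegative} eigenvector of every forward propagator and so is forced to be the Perron vector, which collapses $\lambda_1$ to $0$; the rest --- convergence of finitely many decaying exponentials, ``an entrywise limit of positive matrices is nonnegative'', and ``a nonnegative matrix with zero row sums is zero'' --- is routine. I expect the main obstacle to be purely expository: stating the two limit computations cleanly, and ensuring the corank-$1$ assumption is invoked exactly where, and only where, it is actually used.
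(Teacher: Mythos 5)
Your proof is correct, but there is nothing in the paper to compare it against line by line: the paper does not prove Theorem \ref{thm2} at all --- it quotes the result from Chen et al.~(2021, Theorem 6), with supporting matrix theory from Noutsos (2006), Noutsos and Tsatsomeros (2008), and Olesky, Tsatsomeros, and van den Driessche (2009). Your spectral-theorem argument is therefore a genuinely different (and self-contained) route. Both directions check out: with $M$ symmetric, PSD, and $\ker M = \operatorname{span}\{\mathbf 1\}$, the expansion $e^{-sM} = \sum_k e^{-s\lambda_k} v_k v_k^{\!\top}$ converges entrywise to $\tfrac1n\mathbf 1\mathbf 1^{\!\top} > 0$, and since there are finitely many entries this gives a finite $\tau$; conversely, $\mathbf 1 > 0$ is an eigenvector of the strictly positive matrix $e^{-sM}$ with eigenvalue $1$, so by Perron--Frobenius $\rho(e^{-sM}) = e^{-s\lambda_1} = 1$, forcing $\lambda_1 = 0$, and your projection-based variant avoiding Perron--Frobenius is also sound (the limiting projection is orthogonal to $\mathbf 1$ precisely because $\lambda_1 \ne 0$, and a nonnegative matrix with zero row sums vanishes). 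You also correctly isolate that corank $1$ is used only in the ``only if'' direction. What the citation buys that your argument as written does not: Chen et al.'s Theorem 6 is stronger --- it does not assume corank $1$ but concludes it, and that stronger converse is exactly what the paper invokes later in Section \ref{asymmetric}. Worth noting, though, that your Perron--Frobenius route recovers this with one extra sentence: simplicity of the Perron root of $e^{-sM}$ implies that $0$ is a simple eigenvalue of $M$, i.e.\ corank $1$. What your approach buys: a short elementary proof, using only the spectral theorem for symmetric matrices, that makes the paper self-contained in the case actually stated as Theorem \ref{thm2}.
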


We note for later in the paper that Theorem 6 in Chen at al.~(2021) is actually stronger.  They do not assume that $M$ has corank $1$ and prove: The matrix $M$ is positive semidefinite and has corank $1$ if and only if $-M$ is eventually exponentially positive.

\section{Arrow-of-Time Test} \label{arrow}
We now develop what we will call the arrow-of-time test to ask if an experimenter can determine which direction of the system under study coincides with ``objective" thermodynamic time given by the Second Law.  At laboratory time $t = 0$ (the zero is obviously arbitrary), the experimenter tomographically prepares $n$ independent initial signed probability measures on phase space $X$, which we write as:
\begin{equation}
{\cal S} = [p^{(1)}(0) \; \cdots \; p^{(n)}(0)], 
\end{equation}
where $\text{rank}\, S = n$.  Note that, to constitute a basis, some $p^{(k)}(0)$ must contain negative components.  Each initial measure evolves under the generator $\Lambda$ until a later laboratory time $t > 0$.  Measurements record the $n$ output states, which we write as:
\begin{equation}
{\cal O} = [p^{(1)}(t) \; \cdots \; p^{(n)}(t)]. 
\end{equation}
These measurements are to be understood as obtained via ensemble-based tomography, or as weak or minimally disturbing.

The next task is for the experimenter to fit a forward propagator $F \in \mathbb{R}^{n \times n}$ to the data, that is, to find $F$ such that ${\cal O} = F \, {\cal S}$.  By invertibility of $\cal S$, the unique solution is $F = e^{t\Lambda}$.

We can suppose that the experimenter simply inverts $F$ to obtain a candidate backward propagator.  In fact, this is the unique backward propagator.  Let the experimenter solve ${\cal S} = B \, {\cal O}$.  Since the product of two invertible matrices is invertible, we know that ${\cal O} = e^{t\Lambda} \, {\cal S}$ is invertible, from which $B = {\cal S} \, {\cal O}^{-1} = {\cal S} \, {\cal S}^{-1} \, e^{-t\Lambda} = e^{-t\Lambda}$.

In sum, we see that our preparation and measurement scheme leads the experimenter to identify unique forward and backward propagators.  This leads to our test.

\begin{quote}
\textbf{Arrow-of-Time (AoT) Test:} \textit{There is a finite time $\tau < +\infty$ such that if the experimenter identifies forward and backward propagators over the time interval $[0, t]$, for any $t \ge \tau$, then the direction of thermodynamic time is conclusively determined.}
\end{quote}

We now apply the matrix theory from the previous section to analyze this test.

\begin{theorem} \label{thm3}
There is a finite time $0 < \tau < +\infty$ such that for all times $t \ge \tau$, the forward propagator $e^{t\Lambda}$ is strictly positive.
\end{theorem}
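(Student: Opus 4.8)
The plan is to apply Theorem \ref{thm2} to the matrix $M = L = -\Lambda$. First I would verify that the hypotheses of Theorem \ref{thm2} are met: by Definition \ref{def1}, $L$ is a corank-$1$ signed Laplacian, so in particular it is a signed Laplacian in the sense required. By Theorem \ref{thm1}, the R\'enyi-$2$ Second Law forces $\Lambda$ to be negative semidefinite, hence $L = -\Lambda$ is positive semidefinite. Theorem \ref{thm2} then yields immediately that $-L = \Lambda$ is eventually exponentially positive, i.e., by Definition \ref{def3} the quantity
\begin{equation}
\tau = \inf\{t > 0 : e^{s\Lambda} > 0 \text{ for all } s \ge t\}
\end{equation}
satisfies $\tau < +\infty$. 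This gives a finite $\tau$ with $e^{t\Lambda}$ strictly positive for all $t \ge \tau$.

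The one remaining point is to rule out the degenerate possibility $\tau = 0$, so that we genuinely have $0 < \tau < +\infty$ as claimed. Here I would argue that $e^{0 \cdot \Lambda} = I$ is not strictly positive (its off-diagonal entries are zero), and more to the point that for small $t > 0$ the matrix $e^{t\Lambda} = I + t\Lambda + O(t^2)$ cannot be entrywise positive: if $(i,j)$ is an off-diagonal pair with $L_{ij} \ge 0$ — and such a pair must exist, since each row of $L$ sums to zero and $L \ne 0$ (otherwise $\operatorname{corank}(L) = n \ne 1$), so some off-diagonal entry is nonzero, and by row-sum-zero not all off-diagonal entries in a row can be strictly negative unless that whole row is zero — then $(e^{t\Lambda})_{ij} = -t L_{ij} + O(t^2) \le O(t^2)$, which is non-positive for $t$ small. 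Hence $e^{t\Lambda}$ fails to be strictly positive on an initial interval, forcing $\tau > 0$. (Alternatively, one can invoke the corank-$1$ strengthening of Chen et al.~quoted in the excerpt together with the fact that eventual but not immediate positivity is the generic situation for a nontrivial signed Laplacian.)

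I do not anticipate a serious obstacle: the substance of the theorem is entirely carried by the cited eventual exponential positivity result (Theorem \ref{thm2}), and the only work is the bookkeeping that connects the Second Law to positive semidefiniteness (already done in Theorem \ref{thm1}) and the short argument that $\tau$ is strictly positive. If anything, the mild subtlety is confirming that a corank-$1$ symmetric signed Laplacian really does have a nonnegative off-diagonal entry in each nonzero row, which is what prevents immediate positivity; this is an elementary consequence of the zero-row-sum condition.
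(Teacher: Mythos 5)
Your main argument is exactly the paper's proof: invoke Theorem \ref{thm1} to get that $\Lambda$ is negative semidefinite, observe that $M=-\Lambda$ is a corank-$1$ signed Laplacian and hence positive semidefinite, and apply the ``only if'' direction of Theorem \ref{thm2} to conclude that $e^{t\Lambda}$ is eventually exponentially positive. The paper stops there, and that is all the theorem actually requires, since the statement only asserts the existence of \emph{some} finite positive threshold beyond which the propagator is strictly positive; if positivity happens to hold for all $t>0$ one may simply take $\tau=1$.

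Your added argument that the infimum is strictly positive is where there is a genuine error. The claim that zero row sums force some off-diagonal entry $L_{ij}\ge 0$ in a nonzero row is false: the complete-graph Laplacian $L=nI-\mathbf{1}\mathbf{1}^{\!\top}$ is symmetric, has zero row sums, is positive semidefinite with corank $1$, and every off-diagonal entry equals $-1$. (A row with all off-diagonal entries strictly negative just needs a strictly positive diagonal entry to sum to zero; it need not vanish.) In that example $-\Lambda$ is an irreducible unsigned Laplacian, so $e^{t\Lambda}>0$ for \emph{every} $t>0$ and the infimum in the paper's definition of $\tau$ is $0$ --- the paper itself records this as $\tau=0^{+}$ in its classical-case remark at the end of Section \ref{arrow}. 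So strict positivity of the infimum is not a consequence of the hypotheses and should not be argued for; moreover, even when a pair with $L_{ij}=0$ exists, your bound $(e^{t\Lambda})_{ij}=O(t^2)$ does not rule out positivity, since the $t^2$ coefficient $(\Lambda^2)_{ij}/2$ can be strictly positive. Drop that paragraph and your proof coincides with the paper's.
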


\begin{proof}
Since $\Lambda$ is a signed Laplacian, so is $M = - \Lambda$.  The matrices $\Lambda$ and $M$ have the same null spaces, so $\operatorname{corank}(M) = 1$.  Since $\Lambda$ is negative semidefinite, it follows that $M$ is positive semidefinite.  By the ``only if" direction of Theorem \ref{thm2}, we conclude that $e^{-tM} = e^{t\Lambda}$ is eventually exponentially positive.
\end{proof}

\begin{theorem} \label{thm4}
For any time $t \ge \tau$, the backward propagator (i.e., inverse) $e^{-t\Lambda}$ contains at least one negative entry.
\end{theorem}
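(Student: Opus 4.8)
The plan is to exploit the fact that the backward propagator is $e^{-t\Lambda} = e^{tL}$, where $L = -\Lambda$ is, by Theorem~\ref{thm1} and Definition~\ref{def1}, a positive semidefinite signed Laplacian of corank $1$. Two structural facts do all the work: (a) $L\mathbf{1} = \mathbf{0}$ forces every row sum of $e^{tL}$ to equal $1$ (apply $e^{tL}$ to $\mathbf 1$); and (b) since $L$ is positive semidefinite with a one-dimensional kernel and $n \ge 2$, its eigenvalues are $0 = \lambda_1 < \lambda_2 \le \cdots \le \lambda_n$ with $\lambda_n > 0$, so $e^{tL}$ has the eigenvalue $e^{t\lambda_n} > 1$ for every $t > 0$.

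I would argue by contradiction. Suppose that at some time $t \ge \tau$ — and hence $t > 0$, since $\tau > 0$ by Theorem~\ref{thm3} — the matrix $e^{tL}$ has all entries nonnegative. Together with (a), this makes $e^{tL}$ a (symmetric, hence doubly) stochastic matrix, so each diagonal entry satisfies $(e^{tL})_{ii} \le \sum_j (e^{tL})_{ij} = 1$ and therefore $\operatorname{tr}(e^{tL}) \le n$. On the other hand, computing the trace spectrally via (b), $\operatorname{tr}(e^{tL}) = \sum_{i=1}^n e^{t\lambda_i} = 1 + \sum_{i=2}^n e^{t\lambda_i} > 1 + (n-1) = n$. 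This contradiction shows $e^{tL} = e^{-t\Lambda}$ must contain at least one negative entry. (An equivalent finish: a nonnegative matrix with unit row sums has $\infty$-norm equal to $1$, hence spectral radius $\le 1$, contradicting the eigenvalue $e^{t\lambda_n} > 1$.) Note that the argument in fact delivers the conclusion for every $t > 0$, not just $t \ge \tau$.

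The proof is short, and there is no serious obstacle; the only content is recognizing that corank $1$ together with positive semidefiniteness guarantees a \emph{strictly} positive eigenvalue of $L$, so $e^{tL}$ genuinely expands along one direction, while conservation of probability pins the row sums at $1$ and caps the trace of any nonnegative candidate. One minor point to flag is the degenerate case $n = 1$, where $L = 0$ and $e^{-t\Lambda} = 1 > 0$; this is excluded by the standing assumption that the phase space is nontrivial (a signed-probability basis requires $n \ge 2$ in any case of interest). The main ``obstacle'' is really just not overcomplicating things — in particular, resisting the urge to track the entrywise sign pattern of $e^{tL}$ directly.
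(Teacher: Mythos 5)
Your proof is correct, but it follows a different route from the paper's proof of Theorem~\ref{thm4}. The paper argues directly from the conclusion of Theorem~\ref{thm3}: for $t \ge \tau$ the forward propagator $F(t)$ is strictly positive, and if $B(t) \ge 0$ then the identity $F(t)B(t) = \mathbb{I}$ forces, for $i \ne j$, $\sum_k F_{ik}B_{kj} = 0$ and hence $B_{kj} = 0$ for all $k$, which collapses the diagonal entry $(FB)_{jj}$ to $0$ instead of $1$. That argument uses nothing about $\Lambda$ beyond positivity of $F$ and invertibility -- in particular no symmetry or spectral decomposition -- so it would apply verbatim to any generator (symmetric or not) whose forward propagator has become strictly positive. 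Your argument instead invokes Theorem~\ref{thm1} and the corank-$1$ hypothesis to get the spectrum $0 = \lambda_1 < \lambda_2 \le \cdots \le \lambda_n$ of $L = -\Lambda$, pins the row sums of $e^{tL}$ at $1$ via $L\mathbf{1} = \mathbf{0}$, and derives the contradiction $n \ge \operatorname{tr}(e^{tL}) > n$ (or, equivalently, spectral radius $\le 1$ versus eigenvalue $e^{t\lambda_n} > 1$). This is essentially a trace-aggregated version of the paper's proof of Theorem~\ref{thm5}, which runs the same spectral computation row by row to show each diagonal entry of $B(t)$ exceeds $1$; consistently with that, your argument yields the conclusion for every $t > 0$, not just $t \ge \tau$, which is exactly the stronger reliability statement the paper defers to Section~\ref{reliability}. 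What the paper's Theorem~\ref{thm4} proof buys is independence from the symmetric/negative-semidefinite structure (only $F(t) > 0$ matters); what yours buys is the stronger all-$t$ conclusion, at the cost of duplicating the machinery of Theorem~\ref{thm5}. Your handling of the side conditions ($t \ge \tau > 0$, $n \ge 2$) is fine.
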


\begin{proof}
Suppose, to the contrary, that $B(t) = e^{-t\Lambda} \ge 0$.  From $FB = \mathbb{I}$, we know that for any indices $i \not = j$:
\begin{equation}
0 = (FB)_{ij} = \sum_{k=1}^n F_{ik}\,B_{kj}.
\end{equation}
Since every $F_{ik} > 0$, we must have $B_{kj} = 0$ for every $k$.  Now consider the diagonal entry:
\begin{equation}
1 = (FB)_{jj} = \sum_{k=1}^n F_{jk}\,B_{kj} = 0,
\end{equation}
a contradiction.
\end{proof}

We see that the AoT test works if conducted over a time interval of length at least $\tau$.  One propagator will be found to have all strictly positive entries and the other propagator to have at least one negative entry.  The two propagators are distinguishable and the first one is aligned with the thermodynamic arrow of time -- i.e., the direction in which entropy is non-decreasing.  We next give an example to show that the test can indeed be inconclusive if conducted over too short a time interval.

\begin{Ex} \label{ex1}
Consider the $4 \times 4$ generator:
\begin{equation}
\Lambda
  =\frac13
   \begin{pmatrix}
    -7 & -1 &  2 &  6\\
    -1 & -7 &  2 &  6\\
     2 &  2 & -10&  6\\
     6 &  6 &  6 & -18
   \end{pmatrix}.
\end{equation}
The spectrum is $\{0, -2, -4, -8\}$, which immediately establishes that $\operatorname{corank}(\Lambda) = 1$ and $\Lambda$ is negative semidefinite.  We set $F(t) = e^{t\Lambda}$ and $B(t) = e^{-t\Lambda}$.  Table \ref{tab:signed-qubit-propagators} shows the entry-wise extrema of the propagators at two time points.

\begin{table}[h]
\centering
\begin{tabular}{@{}cccccc@{}}
\toprule
\makebox[2em]{$t$} &
$\displaystyle\min F(t)$ &
$\displaystyle\max F(t)$ &
$\displaystyle\min B(t)$ &
$\displaystyle\max B(t)$ &
\textbf{Test verdict} \\ \midrule
0.05 & $-0.010$ & $+0.895$ & $-0.123$ & $+1.369$ & inconclusive \\
0.20 & $+0.007$ & $+0.677$ & $-0.988$ & $+3.965$ & conclusive \\ \bottomrule
\end{tabular}
\caption{Extremal Entries of the Forward and Backward Propagators}
\label{tab:signed-qubit-propagators}
\end{table}
We see that at time $t = 0.05$, the AoT test is inconclusive, because the forward propagator contains a negative entry.  The crossover occurs at $\tau \simeq 0.17$.  At $t = 0.20$, the AoT test is passed, because the forward propagator is strictly positive while the backward propagator contains a negative entry.  The experimenter correctly identifies the direction of thermodynamic time.
\end{Ex}

Finally, in this section, we recall that, by standard arguments, there is an immediately reliable AoT test for classical systems.  Fix an irreducible unsigned Laplacian $L$ and set $\Lambda = - L$.  Then, by Perron-Frobenius theory, the forward propagator $e^{t\Lambda}$ is strictly positive for all $t > 0$,   Our Theorem \ref{thm3} applies (for any $t > 0$) to conclude that the backward propagator $e^{-t\Lambda}$ contains at least one negative entry.  As expected, the test is reliable instantaneously, i.e., $\tau = 0^+$.

\section{Test Reliability} \label{reliability}
So far, we have not specified the epistemic state of the experimenter.  If the experimenter knows the generator $\Lambda$, then $\tau$ can be calculated (at least up to bounds).  We assume that the generator $\Lambda$ is not directly observable, reflecting the practical challenge of characterizing microscopic dynamics.  Instead, we suppose that the experimenter can obtain mesoscopic information.  Through preparation and measurement of states, the experimenter can calculate the propagators $F(t)$ and $B(t)$ and then conducts the AoT test.  Could the test deliver a misleading result?  Specifically, is it possible that, at some time $t < \tau$, the backward propagator has all strictly positive entries and the forward propagator has at least one negative entry.  The following result rules this out.

\begin{theorem} \label{thm5}
For any $t > 0$, the backward propagator $e^{-t\Lambda}$ contains at least one negative entry.
\end{theorem}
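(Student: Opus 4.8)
The plan is to obtain a contradiction by comparing the spectral radius of the backward propagator with its row sums. Write $M = -\Lambda = L$, which by Definition~\ref{def1} is a symmetric signed Laplacian with $\operatorname{corank}(M) = 1$, and which is positive semidefinite by Theorem~\ref{thm1} (since $\Lambda$ is negative semidefinite). Then $B(t) = e^{-t\Lambda} = e^{tM}$.

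First I would record the spectrum of $B(t)$. Because $M$ is symmetric, positive semidefinite, and has corank exactly $1$, its eigenvalues are $0 = \mu_1 < \mu_2 \le \cdots \le \mu_n$ with $\mu_n > 0$ (here we use $n \ge 2$, so that $\operatorname{rank} M = n-1 \ge 1$; for $n = 1$ the statement genuinely fails, but that case is a one-state system with trivial dynamics and is excluded by assumption). Consequently $B(t) = e^{tM}$ has eigenvalues $1, e^{t\mu_2}, \dots, e^{t\mu_n}$, so the real number $e^{t\mu_n} > 1$ is an eigenvalue of $B(t)$ for every $t > 0$, giving $\rho\bigl(B(t)\bigr) \ge e^{t\mu_n} > 1$.

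Next I would use $M\mathbf{1} = 0$, which forces $B(t)\mathbf{1} = e^{tM}\mathbf{1} = \mathbf{1}$, i.e.\ every row of $B(t)$ sums to $1$. Suppose, toward a contradiction, that $B(t)$ has no negative entry, i.e.\ $B(t) \ge 0$ entrywise. Then its induced $\infty$-norm is $\|B(t)\|_\infty = \max_i \sum_j |B(t)_{ij}| = \max_i \sum_j B(t)_{ij} = 1$, whence $\rho\bigl(B(t)\bigr) \le \|B(t)\|_\infty = 1$, contradicting $\rho\bigl(B(t)\bigr) > 1$. Therefore $B(t)$ must contain a negative entry. (Equivalently, one may note that $B(t) \ge 0$ together with symmetry and unit row sums makes $B(t)$ doubly stochastic, whose eigenvalues all lie in $[-1,1]$ — again incompatible with the eigenvalue $e^{t\mu_n} > 1$.)

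The argument is short, and I do not expect a serious obstacle: the only point requiring care is ensuring that $M$ has a strictly positive eigenvalue, which is exactly where $\operatorname{corank}(M) = 1$ and $n \ge 2$ are needed. It is worth noting that this strengthens Theorem~\ref{thm4} (which only covered $t \ge \tau$) and does so by a different mechanism — a spectral-radius obstruction rather than the $FB = \mathbb{I}$ identity — since for $t < \tau$ the forward propagator may itself have negative entries and the earlier argument no longer applies.
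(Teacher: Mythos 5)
Your proof is correct, and it takes a genuinely different route from the paper's. You argue globally: nonnegativity of $B(t)$ plus the row-sum identity $B(t)\mathbf{1}=\mathbf{1}$ would force $\rho\bigl(B(t)\bigr)\le\|B(t)\|_\infty=1$, while positive semidefiniteness and corank $1$ of $M=-\Lambda$ guarantee an eigenvalue $e^{t\mu_n}>1$ of $B(t)$ for every $t>0$ -- a contradiction that needs only the eigenvalues of $M$ and the standard bound of the spectral radius by an induced norm (or, as you note, the doubly stochastic observation). The paper instead works row by row: using the full orthonormal eigendecomposition $\Lambda = U D U^{\!\top}$ with $u_1=\tfrac{1}{\sqrt n}\mathbf{1}$, it computes $\bigl(B(t)\bigr)_{ii}=\tfrac1n+\sum_{k\ge2}e^{-t\lambda_k}u_{ki}^2>1$ and combines this with the unit row sums to conclude the stronger statement that \emph{every} row of $B(t)$ contains a negative entry. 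Your argument is shorter and avoids tracking eigenvector components, but it delivers only the existence of a single negative entry somewhere; the paper's entrywise computation buys the per-row strengthening. Your caveat about $n\ge2$ (so that $M$ has a strictly positive eigenvalue) is well taken and applies equally, if implicitly, to the paper's proof, where the sum over $k\ge2$ would be empty when $n=1$.
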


\begin{proof}
We actually prove the stronger result that each row of $e^{-t\Lambda}$ contains at least one negative entry.  Since $\Lambda\mathbf 1 = \mathbf 0$, the Taylor series of the matrix exponential yields:
\begin{equation}
B(t)\mathbf 1 = \Bigl[\sum_{k=0}^{\infty} \frac{(-t\Lambda)^k}{k!}\Bigr]\mathbf 1 = \mathbf 1, \label{eq18}
\end{equation}
that is, every row of $B(t)$ sums to $1$.  Since $\Lambda$ is real and symmetric, the spectral theorem gives us an orthonormal eigenbasis $\{u_k\}_{k=1}^n$:
\begin{equation}
\Lambda = U\, D \, U^{\!\top},
\end{equation}
where:
\begin{equation}
U = \bigl[u_1 \, u_2 \, \ldots \, u_n\bigr] \in \mathbb R^{n\times n}, \,\, \text{and} \,\, D = \operatorname{diag}(\lambda_1, \dots, \lambda_n).
\end{equation}

Since $\Lambda\mathbf 1 = \mathbf 0$, the constant vector $\frac1{\sqrt n}\mathbf 1$ is an eigenvector (which we label $u_1$) with eigenvalue $0$ (which we label $\lambda_1)$.  All other eigenvalues $\lambda_k < 0$ by assumption.  It follows that:
\begin{equation}
B(t) = e^{-t\Lambda} = U \, e^{-tD} \, U^{\!\top} = U \, \operatorname{diag}\bigl(1, e^{-t\lambda_2}, \dots, e^{-t\lambda_n}\bigr) \, U^{\!\top},
\end{equation}
from which:
\begin{equation} \label{eq22}
\bigl(B(t)\bigr)_{ij} = \frac{1}{n} + \sum_{k=2}^n e^{-t\lambda_k} \, u_{ki} \, u_{kj}.
\end{equation}
Fix a row $i$.  From orthonormality:
\begin{equation} \label{eq23}
\sum_{k=2}^n u_{k,i}^2 =  1 - u_{1, i}^2 = 1 - \frac{1}{n}.
\end{equation}
Set $j = i$ in Equation \ref{eq22} to obtain:
\begin{equation}
\bigl(B(t)\bigr)_{ii} = \frac{1}{n} + \sum_{k=2}^n e^{-t\lambda_k} \, u_{ki}^2 >  \frac{1}{n} + \sum_{k=2}^n u_{ki}^2 =  \frac{1}{n} + 1 -  \frac{1}{n} = 1,
\end{equation}
using $e^{-t\lambda_k} > 1$ since each $\lambda_k < 0$ for $k \ge 2$, and Equation \ref{eq23}.

But row $i$ sums to $1$ (Equation \ref{eq18}).  Since the diagonal entry is greater than $1$, we conclude that the row has at least one negative off-diagonal entry.
\end{proof}

We see that the experimenter does not need to know $\tau$.  The AoT test will be inconclusive and will not deliver a misleading result if it is conducted early -- i.e., at a time $t < \tau$.  The test is fully operational in this sense.

\section{The Asymmetric Case} \label{asymmetric}
If we drop symmetry, the AoT test may be inconclusive for all finite time intervals $[0, t]$.  Theorem \ref{thm3} fails.

\begin{Ex} \label{ex2}
Consider the $3 \times 3$ generator:
\begin{equation}
\Lambda=\begin{pmatrix}
 0 &  1 & -1\\
-1 &  0 &  1\\
 1 & -1 &  0
\end{pmatrix}.
\end{equation}
It satisfies all properties of Definition \ref{def1} except symmetry.  The Second Law holds (everywhere weakly).  The matrix $\Lambda$ is a generator of rotation about the axis $\mathbf 1$ with angular speed $\sqrt3$, specifically:
\begin{equation}
F(t) = e^{t\Lambda} = R\bigl(\theta(t)\bigr), \,\, \theta(t) = \sqrt{3}\,t,
\end{equation}
where the rotation matrix is:
\begin{equation}
R(\theta) = \frac{1}{3}
\begin{pmatrix}
1+2\cos\theta & 1-\cos\theta+\sqrt3\sin\theta & 1-\cos\theta-\sqrt3\sin\theta\\
1-\cos\theta-\sqrt3\sin\theta & 1+2\cos\theta & 1-\cos\theta+\sqrt3\sin\theta\\
1-\cos\theta+\sqrt3\sin\theta & 1-\cos\theta-\sqrt3\sin\theta & 1+2\cos\theta
\end{pmatrix}.
\end{equation}
The backward propagator is the opposite rotation $B(t) = \bigl(F(t)\bigr)^{-1} = R(-\theta)$.  Assume $R(\theta)$ is strictly positive and focus on the first row:
\begin{align} \label{eq27}
1 + 2\cos\theta &> 0, \\
1 - \cos\theta + \sqrt3\sin\theta &> 0, \label{eq28} \\
1 - \cos\theta - \sqrt3\sin\theta &> 0. \label{eq29}
\end{align}
Inequalities \ref{eq28} and \ref{eq29} together imply:
\begin{equation}
1 - \cos\theta > \sqrt3 \, |\sin\theta\,|.
\end{equation}
Since both sides are non-negative, we can square them, use $\sin^2\theta + \cos^2\theta = 1$, and rearrange to obtain:
\begin{equation}
4\cos^2\theta - 2\cos\theta - 2 > 0,
\end{equation}
from which $\cos\theta < - \frac{1}{2}$ or $\cos\theta > 1$ (impossible).  But Inequality \ref{eq27} says that $\cos\theta > - \frac{1}{2}$, a contradiction.  We conclude that there is no finite time $t > 0$ at which $F(t)$ is strictly positive, that is: $\tau = +\infty$.  Thus, the AoT test fails for this example.  An experimenter conducting the test will never be able to determine the arrow of time.

We note that this example also shows that Theorem \ref{thm5} fails without symmetry.  It can be checked that at each time $t_k= \frac{2\pi k}{3\sqrt3}$, for $k \in \mathbb{Z}$, the rotation angle is $\frac{2\pi k}{3}$ and the backward propagator is the permutation matrix that cyclically shifts coordinates -- i.e., $B(t)$ contains all $0$'s or $1$'s.
\end{Ex}

The generator in this example is antisymmetric: $\Lambda^{\!\top} = - \Lambda$.  Physically, the example describes a circulating, divergence-free signed-probability current with no dissipative pull toward the classical simplex.  In fact, any corank-$1$ generator $\Lambda$ that arises as a phase-space representation of Lindbladian dynamics has finite $\tau$, so the example is superquantum.

The converse direction of Theorem 6 in Chen et al.~(2021) gives us a converse here: If $\Lambda$ is symmetric, has row sums $0$, and passes the AoT test, then it has a Lindbladian realization.  It also has corank $1$ and is negative semidefinite -- and therefore satisfies the Second Law.

\section*{References}

\noindent Boltzmann, L., ``Weitere Studien \"uber das W\"ormegleichgewicht unter Gasmolek\"ulen," \textit{Sitz.-Ber.~Kais.~Akad. Wiss.~Wien (II)}, 66, 1872, 275-370.
\vspace{0.1in}

\noindent Brandenburger, A., and P. La Mura, ``Axiomatization of R\'enyi Entropy on Quantum Phase Space," 2025, at https://arxiv.org/abs/2410.15976.
\vspace{0.1in}

\noindent Carroll, S., and J. Chen, ``Does Inflation Provide Natural Initial Conditions for the Universe," \textit{General Relativity and Gravitation}, 37, 2005, 1671-1674.
\vspace{0.1in}

\noindent Chen, W., D. Wang, J. Liu, Y. Chen, S.Z. Khong, T. Ba\c sar, K. Johansson, and L. Qiu, ``On Spectral Properties of Signed Laplacians with Connections to Eventual Positivity," \textit{IEEE Transactions on Automatic Control}, 66, 2021, 2177-2190.
\vspace{0.1in}

\noindent Dar\'oczy, Z., ``\"Uber die gemeinsame Charakterisierung der zu den nicht vollst\"andig en Verteilungen geh\"origen Entropien von Shannon und von R\'enyi," \textit{Zeitschrift f\"ur Wahrscheinlichkeitstheorie und verwandte Gebiete}, 1, 1963, 381-388.
\vspace{0.1in}

\noindent Guff, T., C.U. Shastry, and A. Rocco, ``Emergence of Opposing Arrows of Time in Open Quantum Systems," \textit{Scientific Reports}, 15, 2025, 3658.
\vspace{0.1in}

\noindent Noutsos, D., ``On Perron-Frobenius Property of Matrices Having Some Negative Entries," \textit{Linear Algebra and Its Applications}, 412, 2006, 132-153.
\vspace{0.1in}

\noindent Noutsos, D., and M. Tsatsomeros, ``Reachability and Holdability of Nonnegative States," \textit{ SIAM Journal on Matrix Analysis and Applications}, 30, 2008, 700-712.
\vspace{0.1in}

\noindent Olesky, D., M. Tsatsomeros, and P. van den Driessche, ``$\mathrm{M}_{\vee}$-Matrices: A Generalization of $\mathrm{M}$-Matrices Based on Eventually Nonnegative Matrices," \textit{The Electronic Journal of Linear Algebra}, 18, 2009, 339-351.
\vspace{0.1in}

\noindent R\'enyi, A., ``On Measures of Information and Entropy," in Neyman, J. (ed.), \textit{Proceedings of the 4th Berkeley Symposium on Mathematical Statistics and Probability}, University of California Press, 1961, 547-561.
\vspace{0.1in}

\noindent Rubino, G., G. Manzano, and \u C. Brukner, ``Quantum Superposition of Thermodynamic Evolutions with Opposing Time's Arrows," \textit{Communications Physics}, 4, 2021, 245.

\end{document}